\documentclass{article}
\usepackage{amsmath,amssymb,amsthm}
\usepackage{authblk}
\usepackage{color}
\usepackage{graphicx}
\usepackage{float}
\usepackage{hyperref}
\usepackage{fancyhdr}
\hypersetup{colorlinks=true,linkcolor=black,citecolor=black}
 \usepackage[toc,page]{appendix}
\usepackage{lipsum}
\usepackage{algorithm}
\usepackage{multirow}
\usepackage[noend]{algpseudocode}
\usepackage{booktabs}
\usepackage{graphicx}
\usepackage{comment}
\usepackage[left=4cm,right=4cm,top=2.5cm,bottom=2.5cm]{geometry}

\newtheorem{theorem}{Theorem}
\newtheorem{proposition}[theorem]{Proposition}

\theoremstyle{definition}
\newtheorem{definition}[theorem]{Definition}
\newtheorem{example}{Example}
\theoremstyle{remark}
\newtheorem{remark}{Remark}



\graphicspath{{figs/}}


\usepackage[all]{xy}           
\usepackage{subcaption}
\usepackage{caption}
\captionsetup{font=small, labelfont=bf}
\captionsetup[sub]{labelsep=period, subrefformat=brace}



\newcount\ancho \newcount\anchom \newcount\anchoa
\newcount\anchob \newcount\altura

\newcommand{\ltilde}[3][0]{\altura=0 \advance\altura by #1
	\ancho=#2 \anchom=\ancho \divide\anchom by 2
	\anchoa=\ancho \divide\anchoa by 4
	\anchob=\anchom \advance\anchob by \anchoa
	\kern-3pt \begin{array}[b]{c}
		\begin{picture}(1,1)(\anchom,-\altura)
		\qbezier(0,2)(\anchoa,5)(\anchom,2)
		\qbezier(\anchom,2)(\anchob,-1)(\ancho,4)
		\qbezier(0,2)(\anchoa,4.5)(\anchom,1.8)
		\qbezier(\anchom,1.8)(\anchob,-1.5)(\ancho,4)
		\end{picture} \\[-4pt]{#3}
	\end{array} \kern-4pt    }

\newcommand{\lhat}[3][0]{\altura=0 \advance\altura by #1
	\ancho=#2 \anchom=\ancho \divide\anchom by 2
	\anchoa=\ancho \divide\anchoa by 4
	\anchob=\anchom \advance\anchob by \anchoa
	\kern-3pt \begin{array}[b]{c}
		\begin{picture}(1,1)(\anchom,-\altura)
		\qbezier(0,2)(\anchoa,4)(\anchom,6)
		\qbezier(\anchom,6)(\anchob,4)(\ancho,2)
		\qbezier(0,2)(\anchoa,3.8)(\anchom,5.6)
		\qbezier(\anchom,5.6)(\anchob,3.8)(\ancho,2)
		\end{picture} \\[-4pt] {#3}
	\end{array} \kern-4pt    }



\makeatletter
\newcommand\prol{\@ifstar{\@proldf}{\@prolpf}}  
\def\@prolpf{\@ifnextchar[{\@prolpf@wrt}{\@prolpf@}}
\def\@prolpf@wrt[#1]#2{\@ifnextchar[{\@prolpf@wrt@at{#1}{#2}}{\@prolpf@wrt@{#1}{#2}}}
\def\@prolpf@wrt@at#1#2[#3]{\prolsymbol^{#1}_{#3}#2}
\def\@prolpf@wrt@#1#2{\prolsymbol^{#1}#2}
\def\@prolpf@#1{\@ifnextchar[{\@prolpf@at{#1}}{\@prolpf@@{#1}}}
\def\@prolpf@at#1[#2]{\prolsymbol_{#2}#1}
\def\@prolpf@@#1{\prolsymbol#1}
\def\@proldf{\@ifnextchar[{\@proldf@wrt}{\@proldf@}}
\def\@proldf@wrt[#1]#2{\@ifnextchar[{\@proldf@wrt@at{#1}{#2}}{\@proldf@wrt@{#1}{#2}}}
\def\@proldf@wrt@at#1#2[#3]{\prolsymbol^{*#1}_{#3}#2}
\def\@proldf@wrt@#1#2{\prolsymbol^{*#1}#2}
\def\@proldf@#1{\@ifnextchar[{\@proldf@at{#1}}{\@proldf@@{#1}}}
\def\@proldf@at#1[#2]{\prolsymbol^*_{#2}#1}
\def\@proldf@@#1{\prolsymbol^*#1}
\def\prolsymbol{\mathcal{T}}
\makeatother







\setlength{\parskip}{3pt}











%

%

%

\usepackage{authblk}

\begin{document}
	
	\title{Metriplectic Euler-Poincar\'e equations: smooth and discrete dynamics}

  \author[1]{Anthony Bloch}
  \author[2]{ Marta Farr\'e Puiggal{\'\i}} 
  \author[3]{David Mart{\'\i}n de Diego}

\affil[1]{Department of Mathematics, University of Michigan 530 Church Street, Ann Arbor, MI, USA  }
\affil[3]{Instituto de Ciencias Matem\'aticas,  ICMAT (CSIC-UAM-UC3M-UCM)
	Madrid, Spain}

\maketitle

 \begin{abstract}In this paper we will study some interesting properties of modifications of the Euler-Poincar\'e equations when we add a special type of dissipative force, 
 so that the equations of motion can be described using the metriplectic formalism. The metriplectic   representation of the dynamics allows us to describe the   conservation of energy, as well as to guarantee entropy production.  Moreover, we describe   the use of discrete gradient systems to numerically simulate the evolution of the continuous metriplectic equations  preserving their main properties: preservation of  energy and correct entropy production rate. 
 \end{abstract}

\textbf{Key words:} Metriplectic system, Poisson manifold, discrete gradient, Euler-Poincar\'e equations
\newline
\textbf{Mathematics Subject Classification:} 70G45, 37J37
\maketitle

 \section{Introduction}
	 In many examples of dynamics, especially in thermodynamics, it is necessary to combine  the dynamical structure of Hamiltonian systems and metric systems to produce what are called metriplectic systems, as originally discussed in the work of Morrison, see \cite{morrison,Bloch-Morrison-Ratiu} (see also \cite{KauMorr,Grmela-Ottinger}). The dynamics is determined using a Poisson bracket for the Hamiltonian part,  combined with a symmetric bracket which allows us to introduce dissipative effects.

 After introducing the notion of metriplectic system, in this paper we study  metriplectic systems derived from a perturbation of the Euler-Poincar\'e equations  or a Lie-Poisson  system by adding a special dissipation term \cite{bloch1996euler,esen}.
 Recall that the Euler-Poincar\'e equations are obtained by reduction from invariant Lagrangian  systems on the tangent bundle $TG$ of a Lie group $G$.  
 The dissipation term that we add to the equations  makes the equations of motion verify two interesting properties: preservation of energy $H$ and also the existence of a Casimir function $S$  of the Lie-Poisson bracket  verifying the property $\dot{S}\geq 0$. Both correspond exactly with the two laws of thermodynamics: preservation of the total energy and irreversible entropy creation. 
 
 To numerically approximate the solutions of a metriplectic system while preserving the energy and the entropy behaviour it is natural to use  a class of geometric integrators called discrete gradient methods. These methods are adequate when we want to preserve exactly the energy of the system. In this sense, they are quite useful for ODEs of the form $\dot{x} = \Pi(x)\nabla H(x)$ with   $x\in {\mathbb R}^n$ and  $\Pi(x)$ a skew-symmetric matrix (not necessarily associated to a Poisson bracket).
 Using a discrete gradient $\bar{\nabla} H(x,x')$ as an adequate approximation of the differential of the Hamiltonian function (see Section 4 for more details), it is possible to define a class of integrators
$x'-x=\bar{\Pi}(x,x')\bar{\nabla}H(x,x')$
 preserving the energy $H$ exactly,  i.e. $H(x)=H(x')$. Here $\bar{\Pi}(x,x')$ is a skew-symmetric matrix approximating $\Pi(x)$. In Section 4, based on discrete gradient methods, we derive geometric integrators  for metriplectic systems and in particular, the geometric derivation of the discrete dissipative term.

 \section{Metriplectic systems}
	
	The theory of {\bf metriplectic systems}  tries to combine together the dynamics generated by Poisson brackets with additional dissipative effects. 
	We will first describe  the different geometric elements  that define a metriplectic system.

	\subsection{ Poisson structures}

	Consider a differentiable manifold $P$ equipped with a Poisson structure  \cite{Abraham1978,marle} given by a bilinear map
	\[
	\begin{array}{ccc}
	C^{\infty}(P)\times C^{\infty}(P)&\longrightarrow& C^{\infty}(P)\\
	(f, g)&\longmapsto& \{f, g\}
	\end{array}
	\]
	called the {\bf Poisson bracket}, satisfying the following properties: 
	\begin{itemize}
		\item[(i)] \emph{Skew-symmetry},  $\{g, f\}=-\{f, g\}$;
		\item[(ii)] \emph{Leibniz rule}, $\{fg, h\}=f\{g, h\}+g\{f, h\}$; 
		\item[(iii)] \emph{Jacobi identity},  $\{\{f, g\}, h\}+\{\{h, f\}, g\}+\{\{g, h\}, f\}=0$;
	\end{itemize}
	for all $f, g, h\in C^{\infty}(P)$. 
	
	Given a Poisson manifold with bracket $\{\, ,\,\}$ and a function $f\in C^{\infty}(P)$ we may associate with $f$ a unique vector field $X_f\in {\mathfrak X}(P)$, the {\bf Hamiltonian vector field} defined by  
	$
	X_f(g)=\{g, f\}$.
	
	
	Moreover, on a Poisson manifold, there exists a unique bivector field $\Pi$, a Poisson bivector  (that is, a twice contravariant skew symmetric differentiable tensor) 
	such that 
	\begin{equation*}
		\{f,g\}:=\Pi(df ,d g), \qquad f, g\in C^\infty (P)\; .
	\end{equation*}
	The bivector field $\Pi$ is called the {\bf Poisson tensor} and the Poisson structure is usually denoted by $(P, \{\, , \}$) or $(P, \Pi)$. 
	The Jacobi identity in terms of the bivector $\Pi$ is written as
	$
	[\Pi, \Pi\,]=0,
	$
	where here $[\; , \; ]$  denotes the Schouten–Nijenhuis bracket (see \cite{Abraham1978}).

	Take coordinates  $(x^i)$, $1\leq i\leq \dim P=m$, and let $\Pi^{ij}$ be the components of the Poisson bivector, that is,
	\[
	\Pi^{ij}=\{x^i, x^j\}, \qquad \Pi=\Pi^{ij}\frac{\partial}{\partial x^i}\wedge \frac{\partial}{\partial x^j}\; . 
	\]
	Then if $f, g\in C^{\infty}(P)$ we have
	\[
	\{f, g\}=\sum_{i,j=1}^m\{x^i, x^j\}\frac{\partial f}{\partial x^i}\frac{\partial g}{\partial x^j}= \sum_{i,j=1}^m\Pi^{ij}\frac{\partial f}{\partial x^i}\frac{\partial g}{\partial x^j}\; ,
	\] 
 and the Hamiltonian vector field is written in local coordinates as
 \[
X_f=\Pi^{ij}\frac{\partial f}{\partial x^j}\frac{\partial }{\partial x^i}\; .
\]
	Observe that the $m\times m$ matrix $(\Pi^{ij})$  verifies the following properties:
	\begin{itemize}
		\item[(i)] \emph{Skew-symmetry}, $\Pi^{ij}=-\Pi^{ji}$  
		\item[(ii)] \emph{Jacobi identity}, 
		\[
		\sum_{l=1}^m\left(
		\Pi^{il}\frac{\partial \Pi^{jk}}{\partial x^l}+\Pi^{kl}\frac{\partial \Pi^{ij}}{\partial x^l}+\Pi^{jl}\frac{\partial \Pi^{ki}}{\partial x^l}\right)=0\, ,\quad  i, j, k=1,\ldots, m.
		\]
	\end{itemize}

	Define ${\sharp}_{\Pi} : T^{*}P \to TP$ by
	\[
	{\sharp}_\Pi (\alpha) = -\iota_\alpha \Pi=\Pi(\cdot, \alpha)  ,
	\]
	where $\alpha \in T^*P$,
	and $\langle \beta, \iota_\alpha \Pi\rangle=\Pi(\alpha, \beta)$ for all $\beta \in T^*P$.
	The rank of $\Pi$ at $p\in P$ is exactly the $\hbox{rank of }({\sharp}_\Pi)_p: T^*_pP\rightarrow T_pP$.  Because of the skew-symmetry of $\Pi$, we know that the rank of $\Pi$  at a point $p\in P$ is an even integer. 
	
	Given a function  $H \in
	C^\infty(P)$, a Hamiltonian function, we have the corresponding {\bf Hamiltonian vector field}:
	\[
	X_H = {\sharp}_\Pi (dH) .
	\]
	Therefore, on a Poisson manifold, a function $H$ determines the following  dynamical system:
	\begin{equation}\label{hamil-eq}
	\frac{dx}{dt}(t)=X_H(x(t))\; .
	\end{equation}
	Moreover, a function $f\in C^{\infty}(P)$  is a first integral of the Hamiltonian vector field  $X_H$ if for any solution 
	$x(t)$ of Equation (\ref{hamil-eq}) we have
	\[
	\frac{df}{dt}(x(t))=0\; .
	\]
	In other words, if  $X_H(f) = 0$ or, equivalently, $\{f, H\} = 0$. In particular,  the
	Hamiltonian function is a conserved quantity since $\{H, H\}=0$ by the skew-symmetry of the bracket.
	For any Poisson manifold $(P,\Pi)$  a function $C\in C^{\infty}(P)$
	is called a
	{\bf Casimir function} of $\Pi$ if $X_C = 0$, that is, if $\{C, g\} = 0$ for all $g\in C^{\infty}(P)$.

	\subsection{Positive semidefinite inner products}
	
	Assume that for  each point $x\in P$ we have a {\bf positive semidefinite inner product} for covectors 
	\[
	{\mathcal K}_x: T^*_xP\times T^*_xP \rightarrow {\mathbb R}
	\]
	from which we can define ${\sharp}_{{\mathcal K}} : T^{*}P \to TP$ by 
	\[
	{\sharp}_{\mathcal K} (\alpha_x)={\mathcal K}_x(\alpha_x, \cdot)  
	\]
	  and  the {\bf gradient vector field} 
	\[
	\hbox{grad}^{\mathcal K} S={\sharp}_{{\mathcal K}}(dS)
	\]
	for any function $S: P\rightarrow {\mathbb R}$. 
	
${\mathcal K}$ defines a {\bf symmetric bracket} given by
\[
(df, dg)={\mathcal K}(df, dg).
\]
	 
Take coordinates  $(x^i)$, $1\leq i\leq \dim P=m$, and let  $K^{ij}$ be the components of the inner product given by 
\[
{ K}^{ij}=(x^i, x^j).
\]
Then if $f, g\in C^{\infty}(P)$, the symmetric bracket is expressed as
\[
(f, g)=\sum_{i,j=1}^m(x^i, x^j)\frac{\partial f}{\partial x^i}\frac{\partial g}{\partial x^j}= \sum_{i,j=1}^m K^{ij}\frac{\partial f}{\partial x^i}\frac{\partial g}{\partial x^j}\; .
\] 
Observe that the $m\times m$ matrix $(K^{ij})$  verifies  $K^{ij}=K^{ji}$ and all the  eigenvalues are positive or zero.

\subsubsection{A construction of the positive semidefinite inner product with 	${\sharp}^{\mathcal K}(dH)=0$  given a Riemannian metric}
	
	 Assume that $P$ is equipped with a Riemannian metric ${\mathcal G}$ inducing a  positive definite inner product ${\mathcal G}^*$ on $T^*P$,
	\[
{\mathcal G}^*: T^*_xP\times T^*_xP	\rightarrow {\mathbb R}
\] 
	defined by ${{\mathcal G}}^*(df, dg)={\mathcal G}(\hbox{grad}f, \hbox{grad} g)$. 
	
 Since we are interested in defining a semidefinite inner product ${\mathcal K}$ such that ${\mathcal K}(dH, \cdot)=0$ then
we define
\begin{eqnarray}
{\mathcal K}(df, dg)&=&\frac{1}{{\mathcal G}^*(dH, dH)} {\mathcal G}^*({\mathcal G}^*(dH, dH)df-{\mathcal G}^*(dH, df)dH, {\mathcal G}^*(dH, dH)dg-{\mathcal G}^*(dH, dg)dH)\nonumber\\
&=&
{\mathcal G}^*(dH, dH){\mathcal G}^*(df, dg)-
{\mathcal G}^*(dH, df){\mathcal G}^*(dH, dg). \label{semid}
\end{eqnarray}
In coordinates we have
	\[
	K^{ij}=C_Hg^{ij}-g^{i\bar{j}}\frac{\partial H}{\partial x^{\bar{j}}}g^{\bar{i}j}\frac{\partial H}{\partial x^{\bar{i}}},
	\]		
	where $C_H=g^{ij}\frac{\partial H}{\partial x^{i}}\frac{\partial H}{\partial x^{j}}$, $(g_{ij})$ are the components of the Riemannian metric in a given coordinate system and 
 $(g^{ij})$ denotes its inverse matrix. 
	
	By construction ${\mathcal K}$ is positive semidefinite  and ${\mathcal K}(dH, \cdot)=0$.
	\begin{remark}
	Additionally we can add new functions $L_a: P\rightarrow {\mathbb R}$, $1\leq a\leq N$ to this construction in such a way that ${\mathcal K}(dL_a, \cdot)=0$, considering 
	\[
	{\mathcal K}(df, dg)=  {\mathcal G}^*(df-C^{ab}{\mathcal G}^*(dL_a, df) dL_b, dg-C^{ab}{\mathcal G}^*(dL_a, dg)dL_b)
	\]
	where $C_{ab}= {\mathcal G}^*(dL_a, dL_b)$, $1\leq a\leq N$ and $L_1=H$.
 \end{remark}
\subsection{Metriplectic systems}\label{section:metriplectic}
	
	A metriplectic system consists of a smooth manifold $P$, two smooth vector bundle maps 
 $\sharp_{\Pi}, \sharp_{{\mathcal K}}: T^*P\rightarrow TP$ covering the identity, and two functions $H, S\in C^{\infty}(P)$ called the  Hamiltonian (or total energy) and the entropy of the system, such that for all $f, g\in  C^{\infty}(P)$:
\begin{itemize}
 \item $\{f, g\}=\langle df, \sharp_{\Pi}(dg)\rangle$ is a Poisson bracket ($\Pi$ denotes the Poisson bivector).
 \item  $(f, g)=\langle df, \sharp_{{\mathcal K}}(dg)\rangle $ is a positive semidefinite symmetric bracket, i.e., $(\cdot,\cdot)$ is bilinear and symmetric.
 \item ${\sharp}_{\mathcal K}(dH)=0$    or equivalently   $(H, f)=0$,  $\forall f\in C^{\infty}(P)$.
 \item  ${\sharp}_{\Pi}(dS)=0$     or equivalently  $\{S, f\}=0$,  $\forall f\in C^{\infty}(P)$, that is,    $S$ is a Casimir function for the Poisson bracket. 
 \end{itemize}

	 Consider the function $E=H+S: P\rightarrow {\mathbb R}$.  Then the dynamics of the  metriplectic system is determined by
	\begin{align*}
		\frac{dx}{dt}&=\sharp_{\Pi}(dE(x(t)))+\sharp_{\mathcal K}(dE(x(t)))\\
		&= \sharp_{\Pi}(dH(x(t)))+\sharp_{{\mathcal K}}(dS(x(t)))\\
		&=X_H(x(t))+	\hbox{grad}^{\mathcal{K}} S(x(t)),
	\end{align*}
 where $X_H=\sharp_{\Pi}(dH)$ and $\hbox{grad}^{\mathcal K} S=\sharp_{{\mathcal K}}(dS)$. 
	From the equations of motion it is simple to deduce the following: 
	\begin{itemize}
		\item {\bf First law: conservation of energy}, $\frac{dH}{dt}=\{H, H\}+(H, S)=0$	
		\item {\bf Second law: entropy production}, $\frac{dS}{dt}=(S, S)\geq 0$. 
	\end{itemize}
	Thus, metriplectic dynamics embodies both the first and second laws of thermodynamics.

	In coordinates, the dynamics of the metriplectic system is written as
	\[
	\dot{x}^i=\Pi^{ij}\frac{\partial H}{\partial x^j}+K^{ij}\frac{\partial S}{\partial x^j}, \quad 1\leq i\leq n
	\]
	or, in matrix form, as
	\begin{equation}\label{matricial}
	\dot{x}=\Pi{ \nabla H}+K\nabla S, \quad 1\leq i\leq n.
	\end{equation}

		\subsection{Symmetry preservation}
	
	Let $\Phi: G\times P\rightarrow P$ be a smooth (left) action of a Lie group $G$ on $P$, given by $\Phi(g, x)=\Phi_g(x)=g\cdot x$ with $g\in G$ and $x\in P$. The action satisfies the following properties: 
 \begin{itemize}
\item $\Phi(e, x)=x$ where $e$ is the neutral element of $G$;
\item For every $g_1, g_2\in G$ and for every $x\in P$
\[
\Phi(g_1, \Phi(g_2, x))=\Phi(g_1g_2, x)\; .
\]
 \end{itemize}
	The infinitesimal generator of the action corresponding to a Lie algebra
	element $\xi \in {\mathfrak g}$ is the vector field $\xi_P$ on $P$ given by
	\[
	\xi_P(x)=\frac{d}{dt}\Big|_{t=0}(\hbox{exp}(\xi t)\cdot x).
	\]
Let $P$ be a Poisson manifold with Poisson bracket $\{\; ,\; \}$ and 
	assume that the action $\Phi$ is  a {\bf Poisson action}, that is, 
	\[
	\Phi_g^*\{f, h\}=\{\Phi_g^*f, \Phi_g^*h\}\; ,\quad \forall f, h\in C^{\infty}(P) \quad \forall g\in G\; .\]

 A {\bf momentum map} for the action $\Phi$ is a smooth map
$J: P\rightarrow {\mathfrak g}^*$  such that for each $\xi\in {\mathfrak g}$, the associated map
$J_{\xi}: P\rightarrow {\mathbb R}$ defined by 
$J_{\xi}(x)=\langle J(x), \xi\rangle $
satisfies that 
	$
	X_{J_{\xi}}=\xi_P$
	for all  $\xi\in {\mathfrak g}$ where 
 $X_{J_{\xi}}(f)=\{f, J_{\xi}\}$. As a consequence,  for any function $f\in C^{\infty}(P)$
	\[
	\{f, J_{\xi}\}=\xi_P(f).
	\]
	
	If the Lie algebra ${\mathfrak g}$  acts  on the Poisson manifold $P$  and admits a momentum map $J:  P\rightarrow {\mathfrak g}^*$, and if $H\circ \Phi_g=H$ (which is equivalent to $\xi_P(H)=0$ for all $\xi\in {\mathfrak g}$), then $J_{\xi}$ is a constant of the motion of $X_H$.

	Additionally, for the metriplectic system we will  assume that 
	\[
	(f, J_\xi)=0, \quad \forall \xi \in {\mathfrak g} \hbox{   and  }  f\in C^{\infty}(P),
	\]
	or equivalently $\sharp_{\mathcal K}(J_\xi)=0$. 
	Then for the metriplectic system we have
	\[
	\frac{dJ_\xi}{dt}=\{J_\xi, H \}+(J_\xi, S)=0
	\]
	and therefore $J_\xi: P\rightarrow {\mathbb R}$  is a constant of motion of the metriplectic system. 
	
	As a particular case of the previous construction we will consider in the next section the case when $P=TG$, where $G$ is  a Lie group, and we consider as a left action $\Psi_g=T^*\mathcal{L}_{g}: T^*G\rightarrow T^*G$, where $\mathcal{L}_{g}:G\rightarrow G$ is the left action. Under the symmetry conditions, the system reduces to a metriplectic system on ${\mathfrak g}^*$, the dual of the Lie algebra of $G$.

	\section{Forced  Euler-Poincar\'e equations and  metriplectic dynamics}
	Consider a Lagrangian system $l:{\mathfrak g}\rightarrow {\mathbb R}$, where ${\mathfrak g}$ is a Lie algebra, 
	and its corresponding Euler-Poincar\'e equations  \cite{holm1998euler,holm2009geometric}:
	\begin{equation}\label{el}
	\frac{d}{dt}\left(\frac{\delta l}{\delta \xi}\right)=ad_{\xi}^*	\frac{\delta l}{\delta \xi},
	\end{equation}
 where $\xi\in {\mathfrak g}$ and $\langle ad_{\xi}^*\alpha, \xi'\rangle=\langle \alpha, [\xi, \xi']\rangle$ for all $\xi'\in {\mathfrak g}$ and $\alpha\in {\mathfrak g}^*$.
	From this equation it is clear that the energy $E_l=\langle\frac{\delta l}{ \delta \xi}, \xi\rangle-l$ of the system is preserved, that is,  
	\[
		\frac{d E_l}{dt}=\frac{d}{dt}\left(  \langle \frac{\delta l}{ \delta \xi}, \xi\rangle-l\right)=0\; .
	\]
	However, there are other variations of this system that are subjected to external forces that also preserve energy. This class of systems is interesting in thermodynamics when we work with a closed system, as we have seen in the Subsection \ref{section:metriplectic} (see also \cite{morrison, bloch1996euler}). 
	For instance, if we add an external force $F: {\mathfrak g}\rightarrow {\mathfrak g}^*$ of the form
	\[
	F(\xi')=ad^*_{\xi} {\mathcal F}(\xi'), \qquad \xi'\in {\mathfrak g}
	\]
	where ${\mathcal F}: {\mathfrak g}\rightarrow {\mathfrak g}^*$ is an arbitrary map, then the {\bf forced Euler-Lagrange equations} are
		\begin{equation}\label{fel}
	\frac{d}{dt}\left(\frac{\delta l}{ \delta \xi}\right)=ad_{\xi}^*	\frac{\delta l}{\delta \xi} +F=ad_{\xi}^*	\left[\frac{\delta l}{ \delta \xi}+{\mathcal F}\right].
	\end{equation}
	
	Assume that ${\mathfrak g}$ is finite dimensional and  $\{e_a\}$, $1\leq a\leq n=\dim {\mathfrak g}$ is a basis of the Lie algebra with structure constants $C_{ab}^d$, that is,
	\[
	[e_a, e_b]=C_{ab}^d e_d,
	\]
	and denote by $(\xi^a(t))$ the coordinates of a curve  $\xi(t)\in {\mathfrak g}$. Then the equations (\ref{fel}) are
	 	\begin{equation}\label{fel2}
	 \frac{d}{dt}\left(\frac{\delta l}{ \delta \xi^b}(\xi(t))\right)=C_{ab}^d \xi^a(t) \left(\frac{\delta l}{ \delta \xi^d}(\xi(t))+{\mathcal F}_d(\xi(t))\right),
	 \end{equation}
	where ${\mathcal F}(\xi)={\mathcal F}_{d}(\xi) e^d$ and  $\{e^a\}$, $1\leq a\leq n$, is the dual basis of $\{e_a\}$.
\begin{example}	
  In the case of $G=SO(3)$ if we identify its Lie algebra ${\mathfrak g}$ with ${\mathbb R}^3$ with the usual vector cross product then we have
	 \[
	 	\frac{d}{dt}\left(\frac{\delta l}{ \delta {\mathbf \Omega}}\right)=\frac{\delta l}{ \delta \mathbf \Omega}\times {\mathbf  \Omega}+{\mathcal F}\times {\mathbf \Omega}
	 \]
	 as a generalization of the equations of the rigid body also preserving the total energy of the system. In particular if 
	 \[
	 l(\Omega_1, \Omega_2, \Omega_3)=\frac{1}{2}(I_1\Omega_1^2+I_2\Omega_2^2+I_3\Omega_3^2)
	 \]
	 then (\ref{fel}) are
	 	\begin{eqnarray*}
	 		I_1\dot{\Omega}_1&=&(I_2-I_3)\Omega_2\Omega_3+\Omega_3{\mathcal F}_2({\mathbf  \Omega})-\Omega_2{\mathcal F}_3({\mathbf  \Omega}),\\
	 			I_2\dot{\Omega}_2&=&(I_3-I_1)\Omega_3\Omega_1+\Omega_1{\mathcal F}_3({\mathbf  \Omega})-\Omega_3{\mathcal F}_1({\mathbf  \Omega}),\\
	 	I_3\dot{\Omega}_3&=&(I_1-I_2)\Omega_1\Omega_2+\Omega_2{\mathcal F}_1({\mathbf  \Omega})-\Omega_1{\mathcal F}_2({\mathbf  \Omega}),
	 \end{eqnarray*}
  where ${\mathcal F}=({\mathcal F}_1, {\mathcal F}_2, {\mathcal F}_3)$ and $\mathbf  \Omega=(\Omega_1, \Omega_2, \Omega_3)$.
\end{example}
 
 Using the Legendre transformation (that we assume in the sequel that is a local diffeomorphism)  we can
	write the forced Euler-Lagrange equations as
	\begin{equation}\label{flp}
	\dot{\mu}=ad^*_{\delta H/\delta \mu} \left(\mu+{\mathcal F}\left(\frac{\delta H}{\delta \mu}\right)\right),
	\end{equation}
 where $H$ is defined by $H(\mu)=\langle \mu, \xi(\mu)\rangle -L(\xi(\mu))$ and  $\mu=\frac{\delta l}{\delta \xi}(\xi)$. 
	This is a particular case of the {\bf forced Lie-Poisson equations} \cite{bloch1996euler}.

	Now, if $C: {\mathfrak g}^*\rightarrow {\mathbb R}$ is a Casimir function for the Lie-Poisson bracket of ${\mathfrak g}^*$ then along the evolution of the system (\ref{flp}) we have
	\begin{equation}\label{cas}
	\frac{dC}{dt}=\left\langle  {\mathcal F}\left(\frac{\delta H}{\delta \mu}\right), \left[
	\frac{\delta H}{\delta \mu}, 	\frac{\delta C}{\delta \mu}
	\right]
	\right\rangle .
	\end{equation}
 \begin{example}
	For the rigid body equations with Hamiltonian  and Casimir functions given by 
	\begin{eqnarray*}
		H(\Pi_1, \Pi_2, \Pi_3)&=&\frac{1}{2}\left(\frac{\Pi_1^2}{I_1}+\frac{\Pi_2^2}{I_2}+\frac{\Pi_3^2}{I_3}\right)\, ,\\
	     C(\Pi_1, \Pi_2, \Pi_3)&=&\frac{1}{2}(\Pi_1^2+\Pi_2^2+\Pi_3^2)\, ,
	\end{eqnarray*}
	in induced coordinates $(\Pi_1=I_1 \Omega_1, \Pi_2=I_2\Omega_2, \Pi_3=I_3\Omega_3)$ on $({\mathbb R}^3)^*\equiv {\mathbb R}^3$, 
	Equation (\ref{cas}) is
	\[
		\frac{dC}{dt}=\left(\frac{1}{I_2}-\frac{1}{I_3}\right)\Pi_2\Pi_3{\mathcal F}_1
	+	\left(\frac{1}{I_3}-\frac{1}{I_1}\right)\Pi_1\Pi_3{\mathcal F}_2
	+\left(\frac{1}{I_1}-\frac{1}{I_2}\right)\Pi_1\Pi_2{\mathcal F}_3.
	\]
For instance, if we take ${\mathcal F}: {\mathfrak g}\equiv {\mathbb R}^3\rightarrow {\mathfrak g}^*\equiv {\mathbb R}^3$ as 
	\[
	{\mathcal F}({\mathbf \Omega})=((I_3-I_2)\Omega_2\Omega_3, (I_1-I_3)\Omega_1\Omega_3, (I_2-I_1)\Omega_1\Omega_2) \, ,
	\]
then we get
	\[
	\frac{dC}{dt}\geq 0\, .
	\]
	As in the  case of metriplectic systems, we have a system verifying the first and second laws of thermodynamics:
		\begin{eqnarray*}\label{relaxed}
		\dot{\Pi}_1&=&\frac{(I_2-I_3)}{I_2I_3}\Pi_2\Pi_3+\frac{(I_1-I_3)}{I_1I_3^2}\Pi_1\Pi_3^2-\frac{(I_2-I_1)}{I_1I_2^2}\Pi_1\Pi^2_2\, ,\\
		\dot{\Pi}_2&=&\frac{(I_3-I_1)}{I_3I_1}\Pi_3\Pi_1+\frac{(I_2-I_1)}{I_1^2I_2}\Pi_1^2\Pi_2-
		\frac{(I_3-I_2)}{I_2I_3^2}\Pi_2\Pi_3^2\, ,\\
		\dot{\Pi}_3&=&\frac{(I_1-I_2)}{I_1I_2}\Pi_1\Pi_2+\frac{(I_3-I_2)}{I_2^2I_3}\Pi_2^2\Pi_3-
		\frac{(I_1-I_3)}{I_1^2I_3}\Pi^2_1\Pi_3\, .
	\end{eqnarray*}
	These are the equations of the relaxed rigid body \cite{morrison}. 
	\end{example}
	
	Motivated by this example,  we want to study the possible  families of functions ${\mathcal F}: {\mathfrak g}\rightarrow {\mathfrak g}^*$ such that 
	\[
	\frac{dC}{dt}\geq 0
	\]
	and then our systems will automatically  verify the second law of thermodynamics where the Casimir function $C$ will play the role of the entropy. 
	
	Given an arbitrary positive semidefinite   scalar product on ${\mathfrak g}$
	\begin{equation}\label{sem}
	{\mathcal K}: {\mathfrak g}\times {\mathfrak g}\longrightarrow {\mathbb R}
	\end{equation}
	we can define ${\mathcal F}$ by 
	\begin{equation}\label{ops1}
	\langle {\mathcal F}(\xi), \eta\rangle ={\mathcal K}( \eta, [\xi, \frac{\partial C}{\partial \mu}])
	\end{equation}
	for all $\eta\in {\mathfrak g}$. 
	
	With this definition it is obvious that
	\begin{eqnarray*}
		\frac{dC}{dt}&=&\left\langle  {\mathcal F}\left(\frac{\delta H}{\delta \mu}\right), \left[
	\frac{\delta H}{\delta \mu}, 	\frac{\delta C}{\delta \mu}
	\right]
	\right\rangle \\
	&=&{\mathcal K}\left( \left[
	\frac{\delta H}{\delta \mu}, 	\frac{\delta C}{\delta \mu}
	\right], \left[	\frac{\delta H}{\delta \mu}, \frac{\partial C}{\partial \mu}\right]\right)\geq 0.
	\end{eqnarray*}
	

	\qquad

	\section{Generic integrators}
	
	In this section we will derive a second order integrator preserving some of the properties of a metriplectic system. The methods are based on the discrete gradient methods that are typically used for systems defined by an almost-Poisson bracket and in this case, the methods guarantee the exact preservation of the energy.  We will start with the classical methods where $P={\mathbb R}^n$ and after this we will discuss the case of $P$ being a general differentiable manifold. 

	\subsection{ Discrete gradient systems}
	
	For ODEs in skew-gradient form, i.e. $\dot{x}=\Pi(x)\nabla H(x)$ with $x\in \mathbb{R}^n$ and $\Pi(x)$ a skew-symmetric matrix, it is immediate to check that $H$ is a first integral. Indeed 
	\[
	\dot{H}=\nabla H(x)^T\dot{x}=\nabla H(x)^T \Pi(x)\nabla H(x)=0\, ,
	\]
	due to the skew-symmetry of $\Pi$. Using discretizations of the gradient $\nabla H(x)$ it is possible to define a class of integrators which preserve the first integral $H$ exactly.

	\begin{definition}\label{def31}
		Let $H:\mathbb{R}^n\longrightarrow \mathbb{R}$ be a differentiable function. Then $\bar{\nabla}H:\mathbb{R}^{2n}\longrightarrow \mathbb{R}^n$ is a discrete gradient of $H$ if it is continuous and satisfies
		\begin{subequations}
			\label{discGrad}
			\begin{align}
				\bar{\nabla}H(x,x')^T(x'-x)&=H(x')-H(x)\, , \quad \, \mbox{ for all } x,x' \in\mathbb{R}^n  \, ,\label{discGradEn} \\
				\bar{\nabla}H(x,x)&=\nabla H(x)\, , \quad \quad \quad \quad \mbox{ for all } x \in\mathbb{R}^n  \, . \label{discGradCons}
			\end{align}
		\end{subequations}
	\end{definition}
	
	Some well-known examples of discrete gradients are:
	\begin{itemize}
		\item The mean value (or averaged) discrete gradient introduced in \cite{HLvL83} and given by
		\begin{equation}
		\label{AVF}
		\bar{\nabla}_{1}H(x,x'):=\int_0^1 \nabla H ((1-\xi)x+\xi x')d\xi \, , \quad \mbox{ for } x'\not=x \, .
		\end{equation}
		
		\item The midpoint (or Gonzalez) discrete gradient, introduced in \cite{GONZ} and given by
		\begin{align}
			\bar{\nabla}_{2}H(x,x')&:=\nabla H\left( \frac{1}{2}(x'+x)\right)+\frac{H(x')-H(x)-\nabla H\left( \frac{1}{2}(x'+x)\right)^T(x'-x)}{|x'-x|^2}(x'-x) \, , \label{gonzalez}\\
			& \mbox{ for } x'\not=x \, . \nonumber
		\end{align}
		\item The coordinate increment discrete gradient, introduced in \cite{ITOH}, with each component given by
		\begin{equation}
		\label{itoAbe}
		\bar{\nabla}_{3}H(x,x')_i:=\frac{H(x'_1,\ldots,x'_i,x_{i+1},\ldots,x_n)-H(x'_1,\ldots,x'_{i-1},x_{i},\ldots,x_n)}{x'_i-x_i}\, , \quad 1\leq i \leq n\, ,
		\end{equation}
		when $x_i'\not=x_i$, and $\bar{\nabla}_{3}H(x,x')_i=\frac{\partial H}{\partial x_i}(x'_1,\ldots,x'_{i-1},x'_i=x_{i},x_{i+1},\ldots,x_n)$ otherwise.
	\end{itemize}

\subsection{Construction of Metriplectic or Generic integrators}

The idea is to construct a geometric integrator preserving as much as possible the properties of the continuous metriplectic Euler-Poincar\'e equations and, in particular, preserving the two laws of thermodynamics. 
 We are in the category of generic integrators \cite{Ottinger+2018+89+100,Shang-Ottinger} since we will use a discretization of the differential of $H$ using a discrete gradient, and a discretization of the positive semidefinite  inner product ${\mathcal K}$. 

Consider a Gonzalez' discrete gradient $\bar{\nabla}_2H: {\mathbb R}^{2n}\rightarrow {\mathbb R}^n$, the Poisson tensor $\Pi(z)$ where $z=\frac{x+x'}{2}$, 
and a discretization ${\mathcal K}_d$ of the inner product ${\mathcal K}$ which is also positive semidefinite. Then the {\bf generic integrator} is constructed as a discretization of equation (\ref{matricial}) as follows: 
\begin{equation}\label{ops}
\frac{x'-x}{h}=\Pi(z)\bar{\nabla}_2H (x, x')+K_d(z)\nabla S(z)
\end{equation}

For any $x\in P$,  we assume that the numerical scheme  (\ref{ops}) generates a local evolution in a neighborhood $U$ of $x$, 
in the sense that there exist  real numbers $\bar{h}, T >0$, and a discrete flow map  $\varphi: U\times [0,\bar{h}]\rightarrow P$ such that  for any $x_0\in U$ and $h\in [0, \bar{h}]$ the sequence $\{x_k\}$ generated by 
\[
x_k=\varphi(x_{k-1}, h)=\varphi^k(x_0, h)
\]
satisfies equation (\ref{ops}) for all $k$ such that $kh\in [0, T]$.

\begin{proposition}{[Second law]}
	The metriplectic integrator  verifies 
	\[
S(x_{k+1})-S(x_k)\geq {\mathcal O}(h^3) \quad \hbox{where}\quad x_{k+1}=\varphi(x_{k}, h) . 
	\]
\end{proposition}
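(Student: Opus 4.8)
The plan is to Taylor-expand the entropy $S$ about the midpoint $z=\tfrac12(x_k+x_{k+1})$ and then substitute the scheme (\ref{ops}); the Poisson contribution will drop out \emph{exactly} because $S$ is a Casimir of $\Pi$, and the surviving term will have the right sign because $K_d$ is positive semidefinite.

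First I would set $x=x_k$, $x'=x_{k+1}$, $\Delta=x'-x$, $z=\tfrac12(x+x')$, and use $x'-z=\tfrac12\Delta$, $x-z=-\tfrac12\Delta$ to get the second-order midpoint expansion
\[
S(x')-S(x)=\langle\nabla S(z),\Delta\rangle+\mathcal{O}(|\Delta|^3),
\]
where the quadratic (Hessian) terms are identical in the expansions of $S(x')$ and $S(x)$ and hence cancel in the difference. Next I would insert $\Delta=h\big(\Pi(z)\bar{\nabla}_2 H(x,x')+K_d(z)\nabla S(z)\big)$ from (\ref{ops}). For the Poisson term, skew-symmetry of $\Pi(z)$ together with the metriplectic Casimir condition $\sharp_\Pi(dS)=0$ — so that $\Pi(z)\nabla S(z)=0$ for every $z$ — gives
\[
\langle\nabla S(z),\Pi(z)\bar{\nabla}_2 H(x,x')\rangle=-\langle\Pi(z)\nabla S(z),\bar{\nabla}_2 H(x,x')\rangle=0,
\]
while positive semidefiniteness of $K_d$ gives $\langle\nabla S(z),K_d(z)\nabla S(z)\rangle\ge0$. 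Hence
\[
S(x_{k+1})-S(x_k)=h\,\langle\nabla S(z),K_d(z)\nabla S(z)\rangle+\mathcal{O}(|\Delta|^3),
\]
which is the discrete counterpart of $\dot S=(S,S)\ge0$.

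To close the argument I would observe that $|\Delta|=\mathcal{O}(h)$: by (\ref{discGradCons}) the Gonzalez gradient $\bar{\nabla}_2 H$ extends continuously to the diagonal with value $\nabla H$ there, so the right-hand side of (\ref{ops}) is bounded on $U$ (shrunk if necessary) for $h\in[0,\bar h]$, and a short bootstrap using $\varphi(\cdot,0)=\id$ yields $|\Delta|\le Mh$ with $M$ uniform along the finite orbit $\{x_k:kh\le T\}$. Then $\mathcal{O}(|\Delta|^3)=\mathcal{O}(h^3)$, and since the $K_d$ term is nonnegative we conclude $S(x_{k+1})-S(x_k)\ge\mathcal{O}(h^3)$.

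The algebra is short; the only real work is bookkeeping on regularity — assuming $S\in C^3$ and $\Pi,K_d,H$ smooth so that the Taylor remainder and the bound $|\Delta|=\mathcal{O}(h)$ hold with constants uniform over the compact region swept out by the trajectory. I expect that, and the small bootstrap for $|\Delta|=\mathcal{O}(h)$, to be the only genuine (though routine) obstacle. It is worth noting that, unlike the first law, the discrete-gradient identity (\ref{discGradEn}) is not used here: the entropy estimate rests solely on the midpoint evaluation of $\Pi$ and $K_d$ together with the Casimir property $\Pi\,\nabla S\equiv0$.
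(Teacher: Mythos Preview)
Your proposal is correct and follows essentially the same approach as the paper: Taylor expand $S$ about the midpoint, substitute the scheme, use the Casimir property $\Pi\nabla S=0$ to kill the Poisson term, and use positive semidefiniteness of $K_d$ for the remainder. In fact you are more explicit than the paper, which neither spells out why the Poisson term vanishes nor justifies the passage from $\mathcal{O}(|\Delta|^3)$ to $\mathcal{O}(h^3)$.
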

	 \begin{proof}
	 Using Taylor's expansion we have that
	\begin{eqnarray*}
	&&S(x_{k+1})-S(x_k)+{\mathcal O}(|x_{k+1}-x_k|^3)= \nabla S(x_{k+1/2})^T(x_{k+1}-x_k)\\
	&&= h\nabla S(x_{k+1/2})^T\Pi(x_{k+1/2})\bar{\nabla}_2H (x_k, x_{k+1})+ h\nabla S(x_{k+1/2})^TK_d(x_{k+1/2})\nabla S(x_{k+1/2})
\geq 0\; ,
	\end{eqnarray*}
where $x_{k+1/2}=\frac{x_k+x_{k+1}}{2}$.
   \end{proof}
 
Now, for the exact preservation of the energy it is necessary to construct a discretization ${\mathcal K}_d$  of ${\mathcal K}$ given in (\ref{semid}) such that $\bar{\nabla}_2H$ is an element of the kernel of ${\mathcal K}_d$.

As in  (\ref{semid}) we  consider 
\begin{eqnarray}
{\mathcal K}_d(df, dg)&=& 
{\mathcal G}^*(\bar{\nabla}_2H, \bar{\nabla}_2H)\left[
{\mathcal G}^*(\bar{\nabla}_2H, \bar{\nabla}_2H){\mathcal G}^*(df, dg)-
{\mathcal G}^*(\bar{\nabla}_2H, df){\mathcal G}^*(\bar{\nabla}_2H, dg)\right]\label{semid-d}.
\end{eqnarray}

With the semi-definite positive inner product (\ref{semid-d}) we deduce the following.
\begin{proposition}{[First law]}
	The metriplectic integrator preserves exactly the energy function $H$, that is,
	\[
	H(x_{k+1})-H(x_k)=0.
	\]
\end{proposition}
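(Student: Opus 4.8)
The plan is to exploit the two design features of the scheme (\ref{ops}): that $\bar{\nabla}_2H$ is a genuine discrete gradient, and that the discretized inner product $\mathcal{K}_d$ in (\ref{semid-d}) has been tailored so that $\bar{\nabla}_2H$ lies in its kernel. First I would invoke the discrete-gradient energy identity (\ref{discGradEn}) at the points $x_k$ and $x_{k+1}=\varphi(x_k,h)$, which gives the \emph{exact} equality
\[
H(x_{k+1})-H(x_k)=\bar{\nabla}_2H(x_k,x_{k+1})^T(x_{k+1}-x_k),
\]
with no $\mathcal{O}(h^3)$ remainder --- this is precisely why one uses a true discrete gradient here rather than the midpoint gradient $\nabla S(x_{k+1/2})$ that appeared in the Second Law proof.

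Next I would substitute the defining relation of the integrator, $x_{k+1}-x_k=h\big(\Pi(z)\bar{\nabla}_2H(x_k,x_{k+1})+K_d(z)\nabla S(z)\big)$ with $z=\tfrac12(x_k+x_{k+1})$, into the right-hand side. This splits $H(x_{k+1})-H(x_k)$ into $h\,\bar{\nabla}_2H^T\Pi(z)\bar{\nabla}_2H$ plus $h\,\bar{\nabla}_2H^TK_d(z)\nabla S(z)$. The first term vanishes because $\Pi(z)$ is skew-symmetric (a quadratic form of a skew matrix evaluated on a single vector is zero), exactly as in the computation $\dot H=\nabla H^T\Pi\nabla H=0$ recalled at the start of Section 4. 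For the second term I would check directly that $\bar{\nabla}_2H$ is annihilated by $K_d(z)$, i.e. $\mathcal{K}_d(\bar{\nabla}_2H,\cdot)=0$, equivalently $\bar{\nabla}_2H^TK_d(z)=0$: substituting $df=\bar{\nabla}_2H$ into (\ref{semid-d}) makes the bracketed expression equal to $\mathcal{G}^*(\bar{\nabla}_2H,\bar{\nabla}_2H)\mathcal{G}^*(\bar{\nabla}_2H,dg)-\mathcal{G}^*(\bar{\nabla}_2H,\bar{\nabla}_2H)\mathcal{G}^*(\bar{\nabla}_2H,dg)=0$ by symmetry of $\mathcal{G}^*$. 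Hence the second term vanishes too, so $H(x_{k+1})=H(x_k)$, and an induction on $k$ shows $H$ is constant along the whole discrete trajectory.

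The computation itself is short; the one point deserving care --- and the natural candidate for the ``main obstacle'' --- is making precise the identification between the scalar bracket $\mathcal{K}_d(df,dg)$ and the matrix $K_d(z)$ occurring in (\ref{ops}), so that the statement ``$\bar{\nabla}_2H$ is in the kernel of $\mathcal{K}_d$'' really becomes $\bar{\nabla}_2H^TK_d(z)=0$ as vectors, together with a remark that (\ref{semid-d}) multiplies by $\mathcal{G}^*(\bar{\nabla}_2H,\bar{\nabla}_2H)$ instead of dividing by it (unlike (\ref{semid})), which conveniently removes any issue at points where that quantity degenerates while leaving the positive semidefiniteness of $\mathcal{K}_d$ --- needed for the Second Law --- intact by the same Cauchy--Schwarz argument.
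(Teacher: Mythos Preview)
Your proposal is correct and follows essentially the same route as the paper's proof: invoke the discrete-gradient identity (\ref{discGradEn}) to write $H(x_{k+1})-H(x_k)=\bar{\nabla}_2H^T(x_{k+1}-x_k)$, substitute the scheme (\ref{ops}), kill the first term by skew-symmetry of $\Pi(z)$ and the second by $K_d(z)\bar{\nabla}_2H=0$. Your additional verification that (\ref{semid-d}) indeed annihilates $\bar{\nabla}_2H$ and your remark on the normalization factor are helpful expansions, but the argument is the same.
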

\begin{proof}
	
	\begin{eqnarray*}
		H(x_{k+1})-H(x_k)&=& \bar{\nabla}_2H^T(x_k, x_{k+1})(x_{k+1}-x_k)\\
		&=& h\bar{\nabla}_2H^T(x_k, x_{k+1})\Pi(x_{k+1/2})\bar{\nabla}_2H (x_k, x_{k+1})\\
		&&+ h\bar{\nabla}_2H^T(x_k, x_{k+1})K_d(x_{k+1/2})\nabla S(x_{k+1/2})= 0
	\end{eqnarray*}
 since $K_d(x_{k+1/2})\bar{\nabla}_2H(x_k, x_{k+1})=0$.
\end{proof}

\subsection{Example: Numerical integration of the relaxing rigid body}

The rigid body equations are given by
\begin{eqnarray*}
	I_1\dot{\Omega}_1&=&(I_2-I_3)\Omega_2\Omega_3,\\ 
I_2\dot{\Omega}_2&=&(I_3-I_1)\Omega_1\Omega_3,\\ 
I_3\dot{\Omega}_3&=&(I_1-I_2)\Omega_1\Omega_2\; .
\end{eqnarray*}
These equations are the Euler-Poincar\'e equations for the Lagrangian $l: {\mathfrak g}\rightarrow {\mathbb R}$
\[
l(\Omega_1, \Omega_2, \Omega_3)=\frac{1}{2}I_1\Omega_1^2+I_2\Omega_2^2+I_3\Omega_3^2\; .
\]
Now, using the Legendre transformation we define the associated momenta: 
\[
p_1=\frac{\partial l}{\partial \Omega_1}=I_1\Omega_1, \quad p_2=\frac{\partial l}{\partial \Omega_2}=I_2\Omega_2,\quad p_3=\frac{\partial l}{\partial \Omega_3}=I_3\Omega_3\, .
\]	
Then the equations of motion of the system become
\begin{eqnarray*}
	\dot{p}_1&=&\frac{I_2-I_3}{I_2I_3}p_2 p_3,\\ 
	\dot{p}_2&=&\frac{I_3-I_1}{I_1I_3}p_1 p_3,\\ 
	\dot{p}_3&=&\frac{I_1-I_2}{I_1I_2}p_1 p_2.
\end{eqnarray*}
	This is a Lie-Poisson system and the equations are written in matrix form as
	\[
	\dot{p}=\Pi \nabla H=\sharp_{\Pi}(dH),
	\]
	where
	\[
	\Pi=\left(
	\begin{array}{ccc}
	0&-I_3p_3&I_2p_2\\
	I_3p_3&0&-I_1p_1\\
	-I_2p_2&I_1p_1&0
	\end{array}
	\right)
\]
	and $H(p_1, p_2, p_3)=\frac{1}{2}\left( \frac{p_1^2}{I_1}+\frac{p_2^2}{I_2}+\frac{p_3^2}{I_3}\right)$. 
		Consider now the positive semidefinite inner product defined by 
	\[
	{\mathcal K}(df, dg)=\left[
	\tilde{\mathcal G}(dH, dH)\tilde{\mathcal G}(df, dg)-
	\tilde{\mathcal G}(dH, df)\tilde{\mathcal G}(dH, dg)\right].
	\]
	After some straightforward computations using the canonical metric  of ${\mathbb R}^3$, we derive 
	that ${\mathcal K}$ is defined by the matrix
	\[
	K=\left(
	\begin{array}{ccc}
	\frac{p_2^2}{I_2^2}+\frac{p_3^2}{I_3^2}&-\frac{p_1 p_2}{I_1 I_2}&-\frac{p_1p_3}{I_1I_3}\\
-\frac{p_1 p_2}{I_1 I_2}&	\frac{p_1^2}{I_1^2}+\frac{p_3^2}{I_3^2}&-\frac{p_2 p_3}{I_2 I_3}\\
	-\frac{p_1 p_3}{I_1 I_3}&-\frac{p_2 p_3}{I_2 I_3}&	\frac{p_1^2}{I_1^2}+\frac{p_2^2}{I_2^2}
	\end{array}
	\right).
	\]
The entropy is defined by the Casimir function
\[
S(p_1, p_2, p_3)=\frac{1}{2}(p_1^2+p_2^2+p_3^2)
\]
and the dynamics of the metriplectic system is given by 
\[
\dot{p}=\Pi \nabla H +K\nabla S\; ,
\]
or
\begin{eqnarray*}
	\dot{p}_1&=&\frac{I_2-I_3}{I_2I_3}p_2 p_3+\left( \frac{1}{I_2^2}-\frac{1}{I_1I_2}\right)p_1p_2^2+\left( \frac{1}{I_3^2}-\frac{1}{I_1I_3}\right)p_1p_3^2,\\ 
	\dot{p}_2&=&\frac{I_3-I_1}{I_1I_3}p_1 p_3+\left( \frac{1}{I_1^2}-\frac{1}{I_1I_2}\right)p_2p_1^2+\left( \frac{1}{I_3^2}-\frac{1}{I_2I_3}\right)p_2p_3^2,\\ 
	\dot{p}_3&=&\frac{I_1-I_2}{I_1I_2}p_1 p_2+\left( \frac{1}{I_1^2}-\frac{1}{I_1I_3}\right)p_3p_1^2+\left( \frac{1}{I_2^2}-\frac{1}{I_2I_3}\right)p^2_2p_3.
\end{eqnarray*}
From construction we get $\Pi\nabla S=0$ and $K\nabla H=0$.

Using the notation $(P_1, P_2, P_3)=\varphi(p_1, p_2, p_3, h)$, the generic integrator is constructed taking
\begin{eqnarray*}
\bar{\nabla}_2 H(\frac{P_1+p_1}{2},\frac{P_2+p_2}{2}, \frac{P_3+p_3}{2})&=&\left(\frac{P_1+p_1}{2I_1},\frac{P_2+p_2}{2I_2}, \frac{P_3+p_3}{2I_3}\right)\\
&=&(z_1/I_1, z_2/I_2, z_3/I_3)
\end{eqnarray*}
and the discrete semidefinite scalar product 
	\[
K_d=\left(
\begin{array}{ccc}
\frac{z_2^2}{I_2^2}+\frac{z_3^2}{I_3^2}&-\frac{z_1 z_2}{I_1 I_2}&-\frac{z_1z_3}{I_1I_3}\\
-\frac{z_1 z_2}{I_1 I_2}&	\frac{z_1^2}{I_1^2}+\frac{z_3^2}{I_3^2}&-\frac{z_2 z_3}{I_2 I_3}\\
-\frac{z_1 z_3}{I_1 I_3}&-\frac{z_2 z_3}{I_2 I_3}&	\frac{z_1^2}{I_1^2}+\frac{z_2^2}{I_2^2}
\end{array}
\right).
\]
The metriplectic integrator is given in this case by the midpoint rule: 
\begin{eqnarray*}
\frac{P_1-p_1}{h}&=&\frac{I_2-I_3}{I_2I_3}z_2 z_3+\left( \frac{1}{I_2^2}-\frac{1}{I_1I_2}\right)z_1z_2^2+\left( \frac{1}{I_3^2}-\frac{1}{I_1I_3}\right)z_1z_3^2,\\ 
\frac{P_2-p_2}{h}&=&\frac{I_3-I_1}{I_1I_3}z_1 z_3+\left( \frac{1}{I_1^2}-\frac{1}{I_1I_2}\right)z_2z_1^2+\left( \frac{1}{I_3^2}-\frac{1}{I_2I_3}\right)z_2z_3^2,\\ 
\frac{P_3-p_3}{h}&=&\frac{I_1-I_2}{I_1I_2}z_1 z_2+\left( \frac{1}{I_1^2}-\frac{1}{I_1I_3}\right)z_3z_1^2+\left( \frac{1}{I_2^2}-\frac{1}{I_2I_3}\right)z_2^2z_3.
\end{eqnarray*}
In this case, since the Casimir is quadratic we have also that
\[
\nabla S=\bar{\nabla}_2 S,
\]
the system verifies that 
\[
S(P_1, P_2, P_3)-S(p_1, p_2, p_3)\geq 0\; ,
\]
and also $H(P_1, P_2, P_3)=H(p_1, p_2, p_3)$ for the discrete  flow $\varphi(p_1, p_2, p_3, h)=(P_1, P_2, P_3 )$. Then in this case the midpoint method preserves the energy exactly and moreover the entropy production rate has the correct behaviour.

\begin{figure}
    \centering
    \begin{subfigure}{0.3\textwidth}
        \centering
        \includegraphics[scale=0.3]{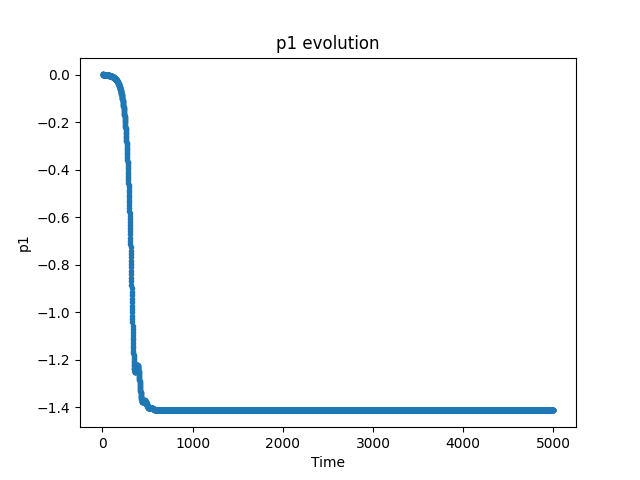}
    \end{subfigure}
    \begin{subfigure}{0.3\textwidth}
        \centering
        \includegraphics[scale=0.3]{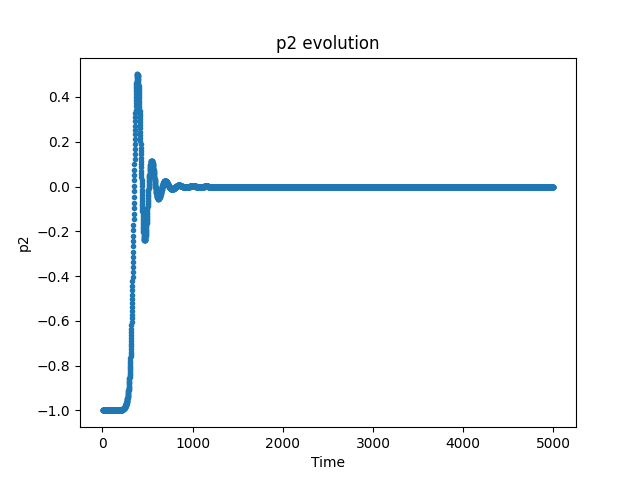}
    \end{subfigure}
    \begin{subfigure}{0.3\textwidth}
        \centering
        \includegraphics[scale=0.3]{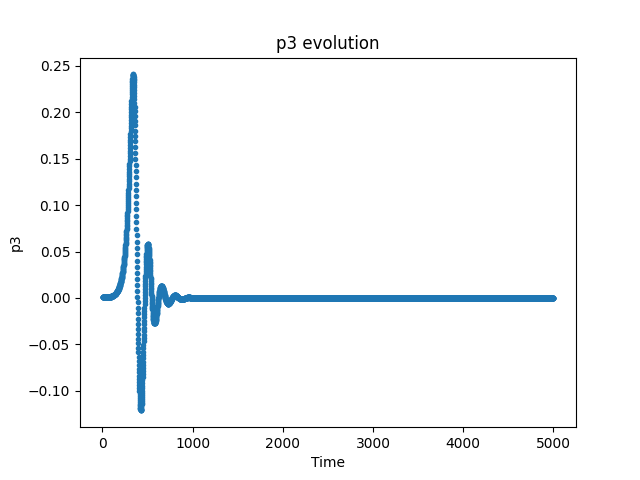}
    \end{subfigure}
    \medskip
    \begin{subfigure}{0.3\textwidth}
        \centering
        \includegraphics[scale=0.3]{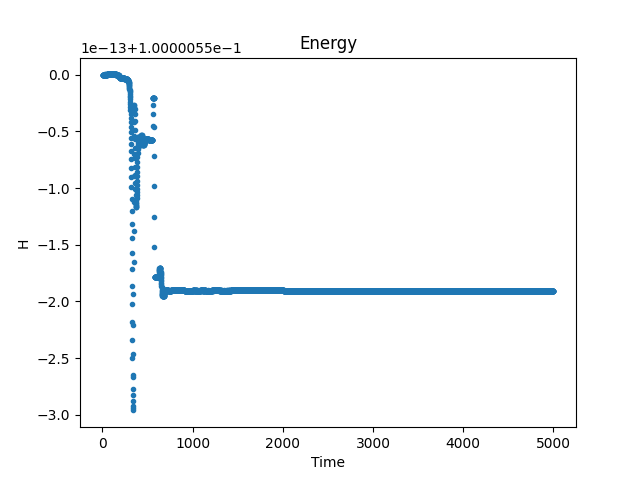}
    \end{subfigure}
    \begin{subfigure}{0.3\textwidth}
        \centering
        \includegraphics[scale=0.3]{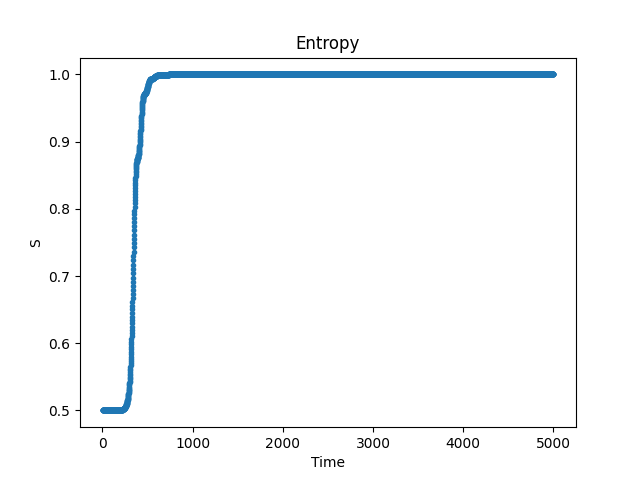}
    \end{subfigure}
    \caption{$I_1=10, I_2=5, I_3=1, h=0.1$, initial conditions $p_1=0.001, p_2=-1, p_3=0.001$}
\end{figure}

\subsection{Extension to differentiable manifolds}

We can extend this construction to the case where we are working on $P$  a general manifold.
	First we  will need to introduce 
	a finite difference map  or retraction map $R_h: U\subset TP\rightarrow P\times P$ and its inverse map  $R^{-1}_h: \bar{U}\subset P\times P\rightarrow TP$ \cite{absil}. For any $(x, x')\in \bar{U}$ we denote by $z=\tau_P(R_h^{-1}(x, x'))\in TP$.  
	We can use a type of retraction that is constructed using  an auxiliary  Riemannian metric ${\mathcal G}$ on $P$ with associated geodesic spray 
	$\Gamma_{\mathcal G}$ \cite{Celle}. The associated Riemannian exponential  for a  small enough $h>0$ is
constructed as 
	\[
	exp_{h}(v)=(\tau_Q(v), exp_{\tau_Q(v)}(hv)),
	\]
	where we have the standard exponential map on a Riemannian manifold defined by
	\[
	exp_{\tau_Q(v)}(v)=\gamma_v(1),
	\]
	where $t\rightarrow \gamma_v(t)$ is the unique geodesic  such that $\gamma_v'(0)=v$.
	Another interesting possibility related to the midpoint rule is 
	\begin{equation}\label{weyl}
	\widetilde{exp}_{h}(v)=(exp_{\tau_Q(v)}(-hv/2), exp_{\tau_Q(v))}(hv/2)).
	\end{equation}
	Both maps are local diffeomorphisms and then we can consider the corresponding inverse maps that we generically denote by $R^{-1}_h$ as at  the beginning of this section.

	Define a discrete gradient as a map
	$\bar{\nabla}{ H}: \bar{U}\subseteq P\times P \longrightarrow T^{*}P$ such that the following diagram commutes
	$$
	\xymatrix{
\bar{U}\subseteq 	P\times P \ar[rr]^{\bar{\nabla}{ H}} \ar[d]^{R^{-1}_h} && T^{*}P \ar[d]^{\pi_{P}}\\
		TP  \ar[rr]^{\tau_{P}} && P
	}
	$$
	and 
	 verifies the following two properties: 
	\begin{subequations}
		\label{discGrad-m}
		\begin{align}
			\langle \bar{\nabla} H(x,x'), R^{-1}_h(x,x')\rangle&=H(x')-H(x)\, , \quad \, \mbox{ for all } (x,x')\in \bar{U}   \, ,\label{discGradEn-m} \\
			\bar{\nabla}H(x,x)&= dH(x)\, , \quad \quad \quad \quad \mbox{ for all } x\in P  \, . \label{discGradCons-m}
		\end{align}
	\end{subequations}

In the case when we have a Riemannian metric ${\mathcal G}$ on $P$  we construct the following midpoint discrete gradient
\begin{align}
\bar{\nabla}_{2} H(x,x')&:=d H(z)+\frac{H(x')-H(x)-d H(z)(R^{-1}_h(x,x'))}{{\mathcal G}(R^{-1}_h(x, x'), R^{-1}_h(x, x'))}\flat_{\mathcal G}(R^{-1}_h(x, x')) \, , \label{gonzalez-m}\\
& \mbox{ for } x'\not=x \, , \nonumber
\end{align}
where 
$
\flat_{\mathcal G}:TP\rightarrow T^*P$ is given by 
$\flat_{\mathcal G}(u)(v)={\mathcal G}(u, v)$ for $u,v\in TP$ and $z=\tau_P(R^{-1}_h(x, x'))\in P$.

The metriplectic integrator that we propose  is written as
\[
R^{-1}_h(x_k, x_{k+1})=\Pi(z_\tau)\bar{\nabla}_{2}H(x_k,x_{k+1})+
K_d(z)\nabla C(z)
\] 
where $z=\tau_P(R^{-1}_h(x_k, x_{k+1}))$ and ${\mathcal K}_d$ is constructed as in (\ref{semid-d}).

\section*{Acknowledgements}
 A.B. was partially supported by NSF grant  DMS-2103026, and AFOSR grants FA
9550-22-1-0215 and FA 9550-23-1-0400.
 DMdD  acknowledges financial support from the Spanish Ministry of Science and Innovation, under grants  PID2022-137909NB-C21, TED2021-129455B-I00 and  CEX2019-000904-S funded by MCIN/AEI\-/10.13039\-/501100011033 and   BBVA Foundation via the project “Mathematical optimization for a more efficient, safer and decarbonized maritime transport”.

\bibliography{references-3.bib}

\begin{thebibliography}{10}

\bibitem{morrison}
Philip~J. Morrison.
\newblock A paradigm for joined {H}amiltonian and dissipative systems.
\newblock {\em Phys. D}, 18(1-3):410--419, 1986.
\newblock Solitons and coherent structures (Santa Barbara, Calif., 1985).

\bibitem{Bloch-Morrison-Ratiu}
Anthony~M. Bloch, Philip~J. Morrison, and Tudor~S. Ratiu.
\newblock Gradient flows in the normal and {K}\"{a}hler metrics and triple bracket generated metriplectic systems.
\newblock In {\em Recent trends in dynamical systems}, volume~35 of {\em Springer Proc. Math. Stat.}, pages 371--415. Springer, Basel, 2013.

\bibitem{KauMorr}
Allan~N. Kaufman and Philip~J. Morrison.
\newblock Algebraic structure of the plasma quasilinear equations.
\newblock {\em Phys. Lett. A}, 88(8):405--406, 1982.

\bibitem{Grmela-Ottinger}
Miroslav Grmela and Hans~Christian \"{O}ttinger.
\newblock Dynamics and thermodynamics of complex fluids. {I}. {D}evelopment of a general formalism.
\newblock {\em Phys. Rev. E (3)}, 56(6):6620--6632, 1997.

\bibitem{bloch1996euler}
Anthony Bloch, PS~Krishnaprasad, Jerrold~E Marsden, and Tudor~S Ratiu.
\newblock The euler-poincar{\'e} equations and double bracket dissipation.
\newblock {\em Communications in mathematical physics}, 175(1):1--42, 1996.

\bibitem{esen}
O\u{g}ul Esen, G\"{o}khan \"{O}zcan, and Serkan S\"{u}tl\"{u}.
\newblock On extensions, {L}ie-{P}oisson systems, and dissipation.
\newblock {\em J. Lie Theory}, 32(2):327--382, 2022.

\bibitem{Abraham1978}
Ralph Abraham and Jerrold~E. Marsden.
\newblock {\em Foundations of Mechanics}.
\newblock {AMS Chelsea Publishing}, {Redwood City, CA}, 2 edition, 1978.

\bibitem{marle}
Paulette Libermann and Charles-Michel Marle.
\newblock {\em Symplectic geometry and analytical mechanics}, volume~35 of {\em Mathematics and its Applications}.
\newblock D. Reidel Publishing Co., Dordrecht, 1987.
\newblock Translated from the French by Bertram Eugene Schwarzbach.

\bibitem{holm1998euler}
Darryl~D Holm, Jerrold~E Marsden, and Tudor~S Ratiu.
\newblock The euler--poincar{\'e} equations and semidirect products with applications to continuum theories.
\newblock {\em Advances in Mathematics}, 137(1):1--81, 1998.

\bibitem{holm2009geometric}
Darryl~D Holm, Tanya Schmah, and Cristina Stoica.
\newblock {\em Geometric mechanics and symmetry: from finite to infinite dimensions}, volume~12.
\newblock Oxford University Press, 2009.

\bibitem{HLvL83}
Amiram Harten, Peter~D. Lax, and Bram van Leer.
\newblock On upstream differencing and {G}odunov-type schemes for hyperbolic conservation laws.
\newblock {\em SIAM Rev.}, 25(1):35--61, 1983.

\bibitem{GONZ}
O.~Gonzalez.
\newblock Time integration and discrete {H}amiltonian systems.
\newblock {\em J. Nonlinear Sci.}, 6(5):449--467, 1996.

\bibitem{ITOH}
T.~Itoh and K.~Abe.
\newblock Hamiltonian-conserving discrete canonical equations based on variational difference quotients.
\newblock {\em J. Comput. Phys.}, 76(1):85--102, 1988.

\bibitem{Ottinger+2018+89+100}
Hans~Christian Öttinger.
\newblock Generic integrators: Structure preserving time integration for thermodynamic systems.
\newblock {\em Journal of Non-Equilibrium Thermodynamics}, 43(2):89--100, 2018.

\bibitem{Shang-Ottinger}
Xiaocheng Shang and Hans~Christian \"{O}ttinger.
\newblock Structure-preserving integrators for dissipative systems based on reversible-irreversible splitting.
\newblock {\em Proc. A.}, 476(2234):20190446, 25, 2020.

\bibitem{absil}
P.-A. Absil, R.~Mahony, and R.~Sepulchre.
\newblock {\em Optimization algorithms on matrix manifolds}.
\newblock Princeton University Press, Princeton, NJ, 2008.
\newblock With a foreword by Paul Van Dooren.

\bibitem{Celle}
Elena Celledoni, S\o~lve Eidnes, Brynjulf Owren, and Torbj\o~rn Ringholm.
\newblock Energy-preserving methods on {R}iemannian manifolds.
\newblock {\em Math. Comp.}, 89(322):699--716, 2020.

\end{thebibliography}
\bibliographystyle{unsrt}

\end{document}